\documentclass[11pt,a4paper]{article}

\usepackage{chao}
\usepackage[capitalise,noabbrev]{cleveref}
\usepackage{soul}
\usepackage{subcaption}
\usepackage{tabularx}
\usepackage{booktabs}
\usepackage{array}
\usepackage{algorithm}
\usepackage{algpseudocode}
\usepackage{listings}
\providecommand{\tightlist}{\setlength{\itemsep}{0pt}\setlength{\parskip}{0pt}}

\title{An Optimal Algorithm for Cardinality-Constrained Diameter
Partitioning}
\author{Chao
Xu\thanks{School of Computer Science and Engineering, UESTC. Email: \texttt{thechaoxu@gmail.com}.} \and Mingdong
Yang\thanks{School of Computer Science and Engineering, UESTC. Email: \texttt{1249776501@qq.com}.}}
\date{}

\providecommand{\polylog}{\operatorname{polylog}}
\providecommand{\R}{\mathbb{R}}

\makeatletter
\@ifpackageloaded{subcaption}{}{\usepackage{subcaption}}
\@ifpackageloaded{caption}{}{\usepackage{caption}}
\AtBeginDocument{%

}
\AtBeginDocument{%

}
\newcounter{pandoccrossref@subfigures@footnote@counter}
{\end{figure}%
\addtocounter{footnote}{-\value{pandoccrossref@subfigures@footnote@counter}}
\@for\f:=\global@pandoccrossref@subfigures@footnotes\do{\stepcounter{footnote}\footnotetext{\f}}%
\gdef\global@pandoccrossref@subfigures@footnotes{}}
\@ifpackageloaded{float}{}{\usepackage{float}}
\floatstyle{ruled}
\@ifundefined{c@chapter}{\newfloat{codelisting}{h}{lop}}{\newfloat{codelisting}{h}{lop}[chapter]}
\floatname{codelisting}{Listing}

\makeatother

\begin{document}

\maketitle

\begin{abstract}
Cardinality-constrained diameter partitioning asks for a partition of
\(n\) items into two classes of prescribed sizes that minimizes the
larger of the two class diameters. We give an \(O(n^2)\) algorithm and a
matching \(\Omega(n^2)\) lower bound if we can only query the weight
between two elements. The algorithm computes the optimum for every
cardinality simultaneously, improving Avis's \(O(n^2\log n)\). The
reduction is to a bottleneck 2-coloring problem on the maximum spanning
tree, solved by a standard tree DP. For a single cardinality with
Euclidean weights, we obtain a subquadratic-time algorithm in any fixed
dimension.
\end{abstract}

Let \(V\) be a set of \(n\) elements with weights
\(w:\binom{V}{2}\to\mathbb{R}\) on its unordered pairs, accessed via
unit-cost edge weight queries. Given a target \(c\in\{0,\dots,n\}\),
\textbf{diameter partitioning} asks for a partition
\(V=V_1\dot\cup V_2\) with \(|V_1|=c\) minimizing
\(\max\{\mathrm{diam}(V_1),\mathrm{diam}(V_2)\},\) where
\(\mathrm{diam}(S)=\max_{uv\in\binom{S}{2}}w(uv)\), with
\(\mathrm{diam}(S)=-\infty\) for \(|S|\le 1\). \(w\) need not be metric
or nonnegative.

Avis \cite{avis1986diameter} solved the cardinality-constrained case in
\(O(n^2\log n)\) in 1986, using a threshold-graph reduction to subset
sum. Asano et al. \cite{asano1988clustering} gave \(O(n\log n)\) for
the unconstrained Euclidean case. Monma and Suri
\cite{monma1991partitioning} handled the unconstrained graph case in
\(O(n^2)\). The operations-research literature came to an equivalent
formulation independently. Fernández et al. \cite{fernandez2013maximum}
introduced \emph{maximum dispersion} in 2013. The objective is to
maximize the minimum same-class pair weight. Tran et al.
\cite{tran2026coloring} gave an \(O(n^4)\) time algorithm for the
cardinality-constrained 2-class variant, \emph{cardinality-constrained
2-anticlustering with maximum dispersion} (2-MDCC). The two formulations
are actually equivalent: negating weights turns diameter partitioning
into maximum dispersion. Hence Avis's algorithm already gives
\(O(n^2\log n)\) for 2-MDCC. We state results for the diameter
formulation. All results in our paper transfer to 2-MDCC verbatim,
either by negation or by a simple reduction.

\subparagraph*{Our contribution}\label{our-contribution}
\addcontentsline{toc}{subparagraph}{Our contribution}

\begin{enumerate}
\def\labelenumi{\arabic{enumi}.}
\tightlist
\item
  We prove an \(O(n^2)\) algorithm that computes the optimum diameter
  for every cardinality \(c\in\{0,\dots,n\}\) \emph{simultaneously}
  (\cref{cor:explicit}). This improves Avis's \(O(n^2\log n)\) algorithm
  by shaving the log factor and producing all cardinalities at once
  instead of just one.
\item
  We show an \(\Omega(n^2)\) lower bound even for a single cardinality
  and even when \(w\) is a metric taking values in \(\{1,2\}\)
  (\cref{thm:lower-bound}).
\item
  For a single cardinality, we show a subquadratic-time algorithm when
  the input is \(n\) points in \(\R^d\) with Euclidean weights
  (\cref{cor:euclidean}).
\end{enumerate}

\section{Reduction to bottleneck tree 2-coloring}\label{sec:reduction}

For a graph \(G=(V,E)\) with edge weights \(w\), let
\(G_{>\lambda}=(V,\{uv\in E:w(uv)>\lambda\})\). A 2-coloring of \(V(G)\)
is \emph{proper} if the endpoints of every edge of \(G\) receive
different colors. It has cardinality \(c\) if one of the color class has
cardinality \(c\). Avis \cite{avis1986diameter} showed that when \(G\)
is the complete graph with the weights \(w\), the optimum diameter is at
most \(\lambda\) iff \(G_{>\lambda}\) admits a proper 2-coloring of
cardinality \(c\). From now on in this section, \(G\) denotes this
complete graph on \(V\) with edge weights \(w\). Binary search over the
\(\binom{n}{2}\) edge weights takes preprocessing time \(O(n^2\log n)\)
for sorting. We avoid this by working on a maximum spanning tree (MST)
\(T\) of \(G\). The forest \(T_{>\lambda}\) replaces \(G_{>\lambda}\).
Such MST-replacement reductions for diameter-style partitioning problems
are folklore \cite{asano1988clustering, caraballo2018balanced},
descending from the MST cycle property \cite[Chapter
23]{cormen2009introduction}.

Fix an MST \(T\) of \(G\) and a proper 2-coloring \(\chi:V\to\{0,1\}\)
of \(T\). For a 2-coloring \(\varphi:V\to\{0,1\}\), write
\[\mathrm{diam}(\varphi)=\max\{\mathrm{diam}(\varphi^{-1}(0)),\mathrm{diam}(\varphi^{-1}(1))\}\]
for the \emph{partition diameter} under \(\varphi\), and
\[M(\varphi)=\max\{w(uv):uv\in E(T),\ \varphi(u)=\varphi(v)\}\] for the
heaviest \(\varphi\)-monochromatic tree edge (with
\(\max\emptyset=-\infty\)). Note \(M(\varphi)\) sees only the \(n-1\)
tree edges, while \(\mathrm{diam}(\varphi)\) uses all \(\binom{n}{2}\)
pair weights of \(G\).

\begin{lemma}\label{lem:identity}

\(G_{>\lambda}\) admits a proper 2-coloring of cardinality \(c\) iff
\(\mathrm{diam}(\chi)\le\lambda\) and \(T_{>\lambda}\) admits a proper
2-coloring of cardinality \(c\).

\end{lemma}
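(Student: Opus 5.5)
The plan is to prove both directions directly, with one tool doing almost all the work: the cycle property of a maximum spanning tree. For every pair \(uv\in\binom{V}{2}\), every edge on the \(T\)-path between \(u\) and \(v\) has weight at least \(w(uv)\). If some edge on that path were lighter than \(uv\), swapping it for \(uv\) would give a spanning tree of larger total weight. This argument is valid even when weights tie, which is the one point that needs care. It gives the following consequence, used in both directions:

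\[(\ast)\qquad w(uv)>\lambda \text{ implies that the whole } T\text{-path from } u \text{ to } v \text{ lies in } T_{>\lambda}.\]

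\emph{Forward direction.} Let \(\psi\) be a proper 2-coloring of \(G_{>\lambda}\) of cardinality \(c\).
\begin{enumerate}
\item Since \(T_{>\lambda}\) is a subgraph of \(G_{>\lambda}\), the same \(\psi\) is a proper 2-coloring of \(T_{>\lambda}\) of cardinality \(c\).
\item To show \(\mathrm{diam}(\chi)\le\lambda\), suppose instead that some pair has \(\chi(u)=\chi(v)\) and \(w(uv)>\lambda\).
\item Because \(\chi\) properly colors the tree \(T\), the \(T\)-path from \(u\) to \(v\) has even length.
\item By \((\ast)\), that path lies in \(G_{>\lambda}\). Adding the edge \(uv\), which is also in \(G_{>\lambda}\), closes an odd cycle in \(G_{>\lambda}\).
\item An odd cycle contradicts the existence of the proper 2-coloring \(\psi\), so \(\mathrm{diam}(\chi)\le\lambda\).
\end{enumerate}

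\emph{Backward direction.} Let \(\psi\) be a proper 2-coloring of \(T_{>\lambda}\) of cardinality \(c\). I will show that the same \(\psi\) is proper on \(G_{>\lambda}\); since it is the same coloring, its cardinality stays \(c\).
\begin{enumerate}
\item Each connected component \(K\) of the forest \(T_{>\lambda}\) is a tree.
\item A tree has exactly two proper 2-colorings, each the complement of the other.
\item \(\chi\) restricted to \(K\) is proper on \(K\), since every edge of \(K\) is a tree edge. Hence on \(K\), \(\psi\) equals either \(\chi\) or \(1-\chi\).
\item Now take any edge \(uv\) of \(G_{>\lambda}\). By \((\ast)\), \(u\) and \(v\) lie in the same component \(K\). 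It follows that \(\psi(u)=\psi(v)\) iff \(\chi(u)=\chi(v)\).
\item If \(\chi(u)=\chi(v)\), then \(w(uv)\le\mathrm{diam}(\chi)\le\lambda\), contradicting \(w(uv)>\lambda\). So \(\chi(u)\ne\chi(v)\), and therefore \(\psi(u)\ne\psi(v)\).
\end{enumerate}

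The only delicate step is stating the cycle property correctly for a maximum spanning tree with possibly tied or negative weights. Both directions reduce to \((\ast)\), and \((\ast)\) needs only the non-strict inequality, which holds regardless of ties.
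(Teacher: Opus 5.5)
Your proof is correct. It uses the same two ingredients as the paper, the cycle property of the maximum spanning tree and parity along \(T\)-paths, but it organizes them differently.

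The paper first proves the \(\lambda\)-free identity \(\mathrm{diam}(\varphi)=\max\{\mathrm{diam}(\chi),M(\varphi)\}\) for every coloring \(\varphi\). It does this by comparing two parities on a \(T\)-path: its length, which is fixed by \(\chi\), and its number of \(\varphi\)-bichromatic edges, which is fixed by \(\varphi\). The lemma then follows by thresholding at \(\lambda\).

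You instead work at a fixed \(\lambda\) throughout. Your \((\ast)\), together with \(T_{>\lambda}\subseteq G_{>\lambda}\), shows that the two graphs have the same connected components. You then replace the edge counting by two standard bipartite facts: an odd cycle cannot be properly 2-colored, and a tree has exactly two proper 2-colorings, each the complement of the other.

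Your route gives a clean statement in Avis's threshold-graph language: \(G_{>\lambda}\) is bipartite iff \(\mathrm{diam}(\chi)\le\lambda\), and in that case its proper 2-colorings are exactly those of \(T_{>\lambda}\). It also loses nothing. Both of your directions keep the same coloring \(\psi\), so you have in fact shown \(\mathrm{diam}(\psi)\le\lambda\iff\max\{\mathrm{diam}(\chi),M(\psi)\}\le\lambda\) for all \(\psi\) and \(\lambda\), which is equivalent to the paper's identity.

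The paper's identity-first phrasing is the more convenient one downstream. It directly gives the optimum at cardinality \(c\) as \(\max\{\mathrm{diam}(\chi),M^*(c)\}\). It also shows that any coloring attaining \(M^*(c)\) is an optimal partition, which the reconstruction in \cref{cor:explicit} relies on.
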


\begin{proof}

We first show the structural identity that drives the lemma: for every
2-coloring \(\varphi:V\to\{0,1\}\),
\begin{equation}\protect\phantomsection\label{eq:decomp}{\mathrm{diam}(\varphi)=\max\{\mathrm{diam}(\chi),M(\varphi)\}.}\end{equation}

\emph{Upper bound for eq.~\ref{eq:decomp}.} Let \(u,v\) be a
\(\varphi\)-monochromatic pair, i.e., \(\varphi(u)=\varphi(v)\). If
\(\chi(u)=\chi(v)\), then \(w(uv)\le\mathrm{diam}(\chi)\). Otherwise
\(\chi(u)\ne\chi(v)\). Consider the \(u\)-\(v\) path in \(T\). Every
tree edge is \(\chi\)-bichromatic, so the path length is odd. The number
of \(\varphi\)-bichromatic edges on the path has the parity of
\(\varphi(u)\oplus\varphi(v)=0\), hence is even. So some tree edge
\(xy\) on the path is \(\varphi\)-monochromatic. The MST bottleneck
property \cite[Chapter 23]{cormen2009introduction} gives
\(w(xy)\ge w(uv)\). So \(w(uv)\le M(\varphi)\).

\emph{Lower bound for eq.~\ref{eq:decomp}.} Any
\(\varphi\)-monochromatic tree edge has its endpoints in the same
\(\varphi\)-class, so \(\mathrm{diam}(\varphi)\ge M(\varphi)\). Pick
\(u,v\) with \(\chi(u)=\chi(v)\) and \(w(uv)=\mathrm{diam}(\chi)\). If
\(\varphi(u)=\varphi(v)\), then \(u,v\) are \(\varphi\)-monochromatic,
giving \(\mathrm{diam}(\varphi)\ge w(uv)=\mathrm{diam}(\chi)\) directly.
Otherwise the \(u\)-\(v\) path has even length (since
\(\chi(u)=\chi(v)\)) but an odd number of \(\varphi\)-bichromatic edges
(since \(\varphi(u)\ne\varphi(v)\)), so some tree edge on it is
\(\varphi\)-monochromatic of weight at least \(\mathrm{diam}(\chi)\).

\emph{From eq.~\ref{eq:decomp} to the iff.} A coloring \(\varphi\)
properly 2-colors \(G_{>\lambda}\) iff every \(\varphi\)-monochromatic
pair in \(G\) has weight \(\le\lambda\), i.e.,
\(\mathrm{diam}(\varphi)\le\lambda\). Similarly \(\varphi\) properly
2-colors \(T_{>\lambda}\) iff \(M(\varphi)\le\lambda\). By
eq.~\ref{eq:decomp}, \(\mathrm{diam}(\varphi)\le\lambda\) iff
\(\mathrm{diam}(\chi)\le\lambda\) and \(M(\varphi)\le\lambda\). The
condition \(\mathrm{diam}(\chi)\le\lambda\) does not depend on
\(\varphi\), so quantifying over cardinality-\(c\) colorings \(\varphi\)
gives the iff statement.

\end{proof}

So minimizing \(\mathrm{diam}(\varphi)\) at cardinality \(c\) reduces to
two pieces: the constant \(\mathrm{diam}(\chi)\), and the smallest
\(\lambda\) for which \(T_{>\lambda}\) admits a proper 2-coloring of
cardinality \(c\). Define
\[M^*(c)=\min\{M(\varphi):\varphi:V\to\{0,1\},\ |\varphi^{-1}(0)|=c\},\]
the minimum heaviest \(\varphi\)-monochromatic tree edge over
cardinality-\(c\) colorings. Equivalently, \(M^*(c)\) is the smallest
tree-edge weight \(\lambda\) such that \(T_{>\lambda}\) admits a proper
2-coloring of cardinality \(c\), or \(-\infty\) if \(T\) itself does.

By \cref{lem:identity}, the optimum diameter at cardinality \(c\) equals
\(\max\{\mathrm{diam}(\chi),M^*(c)\}\). Computing \(T\), \(\chi\), and
\(\mathrm{diam}(\chi)\) takes \(O(n^2)\) time by Prim
\cite{prim1957shortest} and brute-force diameter. Hence it only remains
to compute \(M^*(c)\). This leads us to the \emph{bottleneck tree
2-coloring} problem: Given a weighted tree \(T\) and a cardinality
\(c\in\{0,\dots,|V(T)|\}\), compute \(M^*(c)\).

\section{Algorithms for bottleneck tree
2-coloring}\label{sec:algorithms}

\begin{theorem}\label{thm:single-c}

Bottleneck tree 2-coloring for a single cardinality can be computed in
\(O(n\polylog n)\) time.

\end{theorem}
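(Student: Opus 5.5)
Fix the tree \(T\) and the target \(c\). The plan is to binary search over the \(n-1\) tree-edge weights and, for each candidate \(\lambda\), decide whether \(T_{>\lambda}\) admits a proper 2-coloring of cardinality \(c\). Feasibility is monotone in \(\lambda\): raising \(\lambda\) deletes edges from \(T_{>\lambda}\), so a proper coloring stays proper. Sorting the tree edges takes \(O(n\log n)\) time and the search needs \(O(\log n)\) rounds. Each round must therefore be a decision procedure running in \(O(n\polylog n)\) time. Before searching, we check whether \(\lambda=-\infty\) works, i.e.\ whether \(T\) itself admits a proper coloring of cardinality \(c\); this holds iff \(c\in\{|\chi^{-1}(0)|,|\chi^{-1}(1)|\}\).

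For a fixed \(\lambda\), the forest \(T_{>\lambda}\) splits into components \(C_1,\dots,C_k\). Each component is a tree, so it has exactly two proper colorings, with bipartition sizes \((a_i,b_i)\) and \(a_i+b_i=|C_i|\). Choosing a proper coloring of the forest means choosing, independently for each \(i\), whether component \(i\) contributes \(a_i\) or \(b_i\) to color class \(0\). So the decision problem is a subset-sum question: is \(c\in\{\sum_i x_i : x_i\in\{a_i,b_i\}\}\)? Substituting \(x_i=a_i+\epsilon_i(b_i-a_i)\) with \(\epsilon_i\in\{0,1\}\), the question becomes whether \(c-\sum_i a_i\) is a subset sum of the multiset \(\{d_i=b_i-a_i\}\). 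We may orient each component so that \(d_i\ge 0\). All sums are then bounded by \(\sum_i |C_i|=n\).

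The main obstacle is answering this bounded subset-sum question in \(O(n\polylog n)\) time rather than in \(O(nk)\) time. The tool for this is the Koiliaris--Xu, or Bringmann, algorithm for subset sum with total sum \(\sigma\). It computes all attainable subset sums of a multiset whose elements sum to \(\sigma\) in \(\tilde O(\sigma)\) time, and here \(\sigma\le n\). A simple alternative is divide and conquer over the components: the set of attainable sums of a group of components lies in a range of length at most the total size of that group, and two such sets are merged by a sumset computation via FFT. Merging in a balanced binary tree costs \(O(n\log n)\) per level, \(O(n\log^2 n)\) in total, since sizes add up along each level. The cost is governed by total size rather than number of components, which is what keeps this step near-linear.

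Putting the pieces together: there are \(O(\log n)\) binary-search rounds, and each round takes \(O(n)\) time to find the components and bipartition sizes of \(T_{>\lambda}\) plus \(O(n\log^2 n)\) time for the sumset. The total is \(O(n\log^3 n)=O(n\polylog n)\). The final answer \(M^*(c)\) is the smallest feasible tree-edge weight, or \(-\infty\) if \(T\) itself is feasible.
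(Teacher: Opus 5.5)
Your proof is correct and follows the paper's route: handle the $-\infty$ case via $\chi$, then binary-search over the sorted tree-edge weights, reducing each feasibility test on $T_{>\lambda}$ to a subset-sum instance with total sum at most $n$ and deciding it with Koiliaris--Xu. You also derive the component/bipartition reduction explicitly, where the paper cites Avis for it, and you give a self-contained FFT divide-and-conquer alternative with an explicit $O(n\log^3 n)$ bound.
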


\begin{proof}

Recall \(M^*(c)\) is the smallest tree-edge weight \(\lambda\) such that
\(T_{>\lambda}\) admits a proper 2-coloring of cardinality \(c\), or
\(-\infty\) if \(T\) itself does. Test whether \(\chi\) (or its swap)
has cardinality \(c\) to decide the \(-\infty\) case. Otherwise sort the
\(n-1\) tree-edge weights and binary-search with \(O(\log n)\)
feasibility tests on \(T_{>\lambda}\). Each test reduces to a subset-sum
instance of non-negative integers with total sum \(O(n)\)
\cite{avis1986diameter}. Koiliaris-Xu \cite[Theorem
1.1]{koiliaris2019faster} decides such instances in \(O(n\polylog n)\)
time.

\end{proof}

\begin{theorem}\label{thm:tree-dp}

Bottleneck tree 2-coloring for all cardinalities can be computed in
\(O(n^2)\) time. An optimal labeling for any chosen cardinality can be
reconstructed in \(O(n)\) additional time.

\end{theorem}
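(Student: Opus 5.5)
We root \(T\) at an arbitrary vertex \(r\) and run a bottom-up DP that, for every vertex \(v\), color \(a\in\{0,1\}\) of \(v\), and count \(k\in\{0,\dots,|T_v|\}\), stores
\[D_v[a][k]=\min\{M_v(\varphi):\varphi:V(T_v)\to\{0,1\},\ \varphi(v)=a,\ |\varphi^{-1}(0)|=k\},\]
where \(T_v\) is the subtree rooted at \(v\) and \(M_v(\varphi)\) is the heaviest \(\varphi\)-monochromatic edge inside \(T_v\), with \(\max\emptyset=-\infty\). Entries that cannot be realized are set to \(+\infty\). At the root we then have \(M^*(c)=\min_a D_r[a][c]\) for all \(c\) at once. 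Since the objective is a max of edge weights, this is a \((\min,\max)\) tree knapsack.

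\textbf{Recurrence.} We process the children \(u_1,\dots,u_m\) of \(v\) one at a time, keeping a partial table \(P[a][k]\) for \(v\) together with the subtrees merged so far. Initially \(P[a][k]\) is \(-\infty\) when \(k=[a=0]\) and \(+\infty\) otherwise. To merge child \(u\) with edge weight \(w_u=w(vu)\), we first form the child's contribution
\[C_u[a][j]=\min_{b}\max\{D_u[b][j],\ [a=b]\,w_u\},\]
where \([a=b]\,w_u\) means \(w_u\) if \(a=b\) and \(-\infty\) otherwise. We then update
\[P'[a][k]=\min_{i+j=k}\max\{P[a][i],C_u[a][j]\}.\]
Correctness follows because \(M_v\) decomposes as the max over the edges \(vu\) and over the subtrees \(T_u\), and these pieces share only the color of \(v\). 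Each count is additive across the disjoint parts.

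\textbf{Running time.} This is the step that needs care, since a naive bound gives \(O(n^3)\). A merge costs \(O(s\cdot |T_u|)\), where \(s\) is the size accumulated so far. The standard charging argument bounds the total by \(O(n^2)\): it charges \(s\cdot|T_u|\) to the pairs of vertices (one already merged, one in \(T_u\)), and each pair of vertices is charged exactly once, at their lowest common ancestor. The argument needs the table dimensions to be bounded by current subtree sizes and not by \(n\); I will state that explicitly. Computing \(C_u\) is linear in \(|T_u|\), so it does not affect the bound.

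\textbf{Reconstruction.} For the reconstruction claim, the natural approach is to store argmins. Storing the full argmin for every \((i,j)\) split costs \(O(n^2)\) space, which is fine within the time bound. A simpler alternative stores, for each merge step, the table \(P\) before the merge. Again by the LCA charging argument the total space is \(O(n^2)\). We then backtrack from the root with the target \((a,c)\). At each merge (processed in reverse) we find the split \(j\) and the child color \(b\) that attain the stored value. Finding the split by scanning costs \(O(\min(s,|T_u|))\) per merge, which sums to \(O(n)\) only if we are careful. To avoid that care, we record during the forward pass, for each entry \(P'[a][k]\), its optimal split \(j\) and child color \(b\). With these pointers each backtracking step is \(O(1)\) per edge, so the whole reconstruction takes \(O(n)\). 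The pointer storage is \(O(\sum s\cdot|T_u|)=O(n^2)\), the same as the time. I expect the main obstacle to be stating this bookkeeping cleanly, that is, arranging the stored snapshots and pointers so the backtrack really runs in \(O(n)\).
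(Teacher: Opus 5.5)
Your proposal is correct and matches the paper's proof essentially step for step. Both use the same rooted \((\min,\max)\) tree-knapsack DP with children absorbed one at a time into a running table, the same LCA charging argument bounding total work by \(\binom{n}{2}\), and the same stored argmin pointers for an \(O(n)\) backtrack; the one cosmetic difference is that you minimize over the child's color in \(C_u\) before the convolution, which is equivalent to the paper's joint minimization over split and child color because \(\max\{x,\min_b y_b\}=\min_b\max\{x,y_b\}\).
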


\begin{proof}

We design a dynamic programming algorithm. Root \(T\) at any vertex
\(r\). For each node \(u\) with rooted subtree \(T_u\), let \(D_u^b[q]\)
be the smallest \(M(\varphi)\) over 2-colorings
\(\varphi:V(T_u)\to\{0,1\}\) with \(\varphi(u)=b\) and
\(|\varphi^{-1}(0)\cap V(T_u)|=q\), where \(M(\varphi)\) uses only edges
of \(T_u\) (and \(D_u^b[q]=+\infty\) if no such \(\varphi\) exists).

For a leaf \(u\), \(D_u^0[1]=D_u^1[0]=-\infty\) and other entries are
\(+\infty\). For an internal node \(u\) with children \(v_1,\dots,v_k\):
\[D_u^b[q] = \min_{\substack{g_i\in\{0,1\}\\ q_1+\cdots+q_k = q-[b=0]}}\;\max_{i\in[k]}\max\big\{D_{v_i}^{g_i}[q_i],\ [b=g_i]\cdot w(uv_i)\big\},\]
where \([\cdot]\) is the Iverson bracket: \([P]=1\) if \(P\) is true and
\(0\) otherwise. The product \([P]\cdot w\) with an edge weight uses the
max-plus convention: \(w\) if \(P\) is true and \(-\infty\) otherwise.
Each child \(v_i\) contributes its subtree optimum at the chosen color
\(g_i\), plus the edge \(uv_i\) if it becomes \(\varphi\)-monochromatic.

To compute the recurrence efficiently, \emph{absorb} children into a
\emph{running table} \(R^b[q]\) one at a time. Initialize \(R\) to
represent \(u\) alone (the leaf base case applied to \(R\)). For each
child \(v\) in turn, replace \(R\) with
\[R'^b[Q]=\min_{\substack{q_1+q_2=Q\\ g\in\{0,1\}}}\max\big\{R^b[q_1],\ D_v^g[q_2],\ [b=g]\cdot w(uv)\big\}.\]
After all \(k\) children of \(u\) are absorbed, \(D_u\gets R\).

If the current \(R\) represents \(s\) vertices, absorbing \(v\) visits
\(s\cdot|T_v|\) pairs \((q_1,q_2)\) in \(O(1)\) time each, costing
\(O(s\cdot|T_v|)\).

For total time, each absorption cost \(s\cdot|T_v|\) counts the
unordered vertex pairs \(\{x,y\}\) with \(x\) in \(R\) and
\(y\in V(T_v)\). Every unordered pair \(\{x,y\}\subseteq V\) is counted
at exactly one absorption: at \(u=\mathrm{LCA}(x,y)\), exactly one of
\(x,y\) is in \(R\) before \(u\) absorbs the child whose subtree
contains the other. So total cost \(=\binom{n}{2}=O(n^2)\).

At the root, \(M^*(c)=\min(D_r^0[c],D_r^1[c])\). Storing argmin choices
at each absorption lets us walk back from \(r\) in \(O(n)\) time and
recover an optimal \(\varphi\).

\end{proof}

Combining \cref{lem:identity} with \cref{thm:tree-dp} gives the optimum
diameter for every cardinality \(c\) in \(O(n^2)\) time.

\begin{corollary}\label{cor:explicit}

Diameter partitioning can be computed for all cardinalities in
\(O(n^2)\) time. An optimal partition for any chosen cardinality can be
reconstructed in \(O(n)\) additional time.

\end{corollary}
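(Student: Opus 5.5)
The corollary follows by assembling the pieces of \cref{sec:reduction} with \cref{thm:tree-dp}.

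\emph{Preprocessing.} We first build the maximum spanning tree \(T\) of the complete graph \(G\) using Prim's algorithm \cite{prim1957shortest}. With an array-based priority structure on a dense graph this takes \(O(n^2)\) time and issues \(O(n^2)\) weight queries. Next, a BFS over \(T\) gives the proper 2-coloring \(\chi\) in \(O(n)\) time. We then compute \(\mathrm{diam}(\chi)\) by brute force over all \(\binom{n}{2}\) pairs, in \(O(n^2)\) time.

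\emph{All cardinalities.} We run the dynamic program of \cref{thm:tree-dp} on \(T\), which yields \(M^*(c)\) for every \(c\in\{0,\dots,n\}\) in \(O(n^2)\) time. For each \(c\) we output \(\max\{\mathrm{diam}(\chi),M^*(c)\}\).

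\emph{Correctness.} By \cref{eq:decomp} in the proof of \cref{lem:identity}, every cardinality-\(c\) coloring \(\varphi\) satisfies \(\mathrm{diam}(\varphi)=\max\{\mathrm{diam}(\chi),M(\varphi)\}\). Minimizing over such \(\varphi\), and using that \(\max\) with a constant is monotone, the optimum equals \(\max\{\mathrm{diam}(\chi),M^*(c)\}\). This argument works directly with the identity, so it needs no threshold \(\lambda\) and avoids the binary search.

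Two boundary points need a check. If \(M^*(c)=+\infty\), no coloring of cardinality \(c\) would exist; this never happens, since every \(c\in\{0,\dots,n\}\) is realized by some coloring. The cases \(c\in\{0,1,n-1,n\}\) and the convention \(\mathrm{diam}=-\infty\) for sets of size at most one are already covered by the identity, because that identity holds for every \(\varphi\).

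\emph{Reconstruction.} Given a chosen \(c\), the argmin walk-back of \cref{thm:tree-dp} returns in \(O(n)\) time a coloring \(\varphi\) with \(|\varphi^{-1}(0)|=c\) and \(M(\varphi)=M^*(c)\). By \cref{eq:decomp}, \(\mathrm{diam}(\varphi)=\max\{\mathrm{diam}(\chi),M^*(c)\}\), which is optimal. So the partition \((\varphi^{-1}(0),\varphi^{-1}(1))\) is an optimal partition.

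The only point needing care is this last one: the coloring the DP optimizes for \(M\) must also be optimal for \(\mathrm{diam}\). The pointwise identity \cref{eq:decomp} settles this, rather than only the existential statement of \cref{lem:identity}. The total running time is \(O(n^2)\) for computing all values, plus \(O(n)\) for each reconstruction.
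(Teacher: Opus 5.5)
Your proposal is correct and is essentially the paper's argument: build $T$, $\chi$, and $\mathrm{diam}(\chi)$ in $O(n^2)$ time, run the all-cardinalities DP of \cref{thm:tree-dp}, and output $\max\{\mathrm{diam}(\chi),M^*(c)\}$. Using the pointwise identity eq.~\ref{eq:decomp} to certify that the reconstructed coloring is also diameter-optimal only makes explicit a step the paper leaves implicit when it cites \cref{lem:identity}.
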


We also get the same corollary for 2-MDCC.

\section{Euclidean metric}\label{sec:euclidean}

By \cref{lem:identity}, diameter partitioning reduces to one MST, one
diameter computation per color class, and one bottleneck tree 2-coloring
instance. The first two admit subquadratic algorithms in the Euclidean
metric. In the plane, the Euclidean diameter and the Euclidean MST each
admit \(O(n\log n)\) algorithms
\cite{preparata1985computational, monma1990computing}. For fixed
\(d\ge 3\), both admit \(O(n^{2-2/(\lceil d/2\rceil+1)+\epsilon})\)-time
algorithms for every \(\epsilon>0\) \cite{agarwal1992farthest}.

For 2-MDCC, the negation argument fails since negated Euclidean
distances do not form a Euclidean metric. \cref{lem:identity} still
applies after swapping max with min throughout: replace MAX-MST with
MIN-MST and the diameter computation with closest pair. In the plane,
the Euclidean MST and the Euclidean closest pair both admit
\(O(n\log n)\) algorithms \cite{preparata1985computational}. For fixed
\(d\ge 3\), the Euclidean MST admits
\(O(n^{2-2/(\lceil d/2\rceil+1)+\epsilon})\) for every \(\epsilon>0\)
\cite{agarwal1991euclidean}. Closest pair stays at \(O(n\log n)\) in
any fixed dimension \cite{preparata1985computational}, but the MST step
dominates.

Combined with \cref{thm:single-c}:

\begin{corollary}\label{cor:euclidean}

For \(n\) points in \(\R^d\) under the Euclidean metric, both diameter
partitioning and 2-MDCC for a single cardinality can be solved in
\(O(n\polylog n)\) time when \(d=2\). For fixed \(d\ge 3\), both can be
solved in \(O(n^{2-2/(\lceil d/2\rceil+1)+\epsilon})\) time for every
\(\epsilon>0\).

\end{corollary}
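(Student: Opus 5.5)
The plan is to assemble the pipeline of \cref{lem:identity} and \cref{thm:single-c}, replacing each quadratic step by a geometric subroutine. For diameter partitioning, compute in this order: a maximum spanning tree \(T\) of the complete Euclidean graph; a proper 2-coloring \(\chi\) of \(T\) by a BFS in \(O(n)\) time; \(\mathrm{diam}(\chi)=\max\{\mathrm{diam}(\chi^{-1}(0)),\mathrm{diam}(\chi^{-1}(1))\}\) as two Euclidean diameter computations on point subsets; and \(M^*(c)\) by \cref{thm:single-c} in \(O(n\polylog n)\) time. By \cref{lem:identity}, the answer is \(\max\{\mathrm{diam}(\chi),M^*(c)\}\).

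The costs come from the cited results. In the plane, the maximum spanning tree and the diameter each take \(O(n\log n)\). The maximum spanning tree is the step to double-check. Monma et al.\ \cite{monma1990computing} compute exactly the maximum spanning tree in \(O(n\log n)\), not merely the minimum spanning tree, so I would cite that. For \(d\ge 3\), I would use the farthest-neighbor machinery of \cite{agarwal1992farthest}, which gives both diameter and maximum spanning tree in \(O(n^{2-2/(\lceil d/2\rceil+1)+\epsilon})\). These bounds dominate the \(O(n\polylog n)\) term. One thing to verify is that the tree DP and subset-sum step of \cref{thm:single-c} only inspect the \(n-1\) tree edges, so they never touch the implicit complete graph.

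For 2-MDCC, I would not negate weights. Instead I would redo \cref{lem:identity} with max and min swapped. Let \(T\) be a minimum spanning tree, \(\chi\) a proper 2-coloring of \(T\), and define \(\mathrm{sep}(\varphi)\) as the minimum monochromatic pair weight and \(m(\varphi)\) as the minimum monochromatic tree-edge weight. The parity argument is unchanged, and the cycle property now reads \(w(xy)\le w(uv)\) for every tree edge \(xy\) on the \(u\)-\(v\) path. Together these give \(\mathrm{sep}(\varphi)=\min\{\mathrm{sep}(\chi),m(\varphi)\}\). Here \(\mathrm{sep}(\chi)\) is two closest-pair computations, costing \(O(n\log n)\) in any fixed dimension. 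Maximizing \(m(\varphi)\) at cardinality \(c\) is bottleneck tree 2-coloring on \(T\) with negated tree weights, which is a legitimate input since \(w\) need not be nonnegative. So \cref{thm:single-c} applies. The Euclidean minimum spanning tree costs \(O(n\log n)\) for \(d=2\) and \(O(n^{2-2/(\lceil d/2\rceil+1)+\epsilon})\) for \(d\ge3\) by \cite{agarwal1991euclidean}.

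The main obstacle is justifying the swapped identity carefully rather than by analogy, mainly the cycle-property direction and the convention \(\min\emptyset=+\infty\) for singleton classes. Once that holds, the corollary follows by summing the running times.
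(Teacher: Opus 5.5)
Your proposal is correct and matches the paper's argument: for diameter partitioning you use a maximum spanning tree, the diameter of each color class, and \cref{thm:single-c}, and for 2-MDCC you use a minimum spanning tree, closest pair in each class, and the max/min-swapped \cref{lem:identity}, with the same cited geometric subroutines dominating the $O(n\polylog n)$ tree step. Your extra details (the swapped cycle-property direction, and running \cref{thm:single-c} on negated tree weights) are what the paper leaves implicit; note only that the last step must use the binary-search and subset-sum route of \cref{thm:single-c}, since the tree DP of \cref{thm:tree-dp} is quadratic.
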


\section{Lower bound}\label{sec:lower-bound}

\begin{theorem}\label{thm:lower-bound}

For infinitely many \(n\), any randomized algorithm that outputs the
exact optimum of diameter partitioning or 2-MDCC on \(n\) vertices with
success probability at least \(2/3\) makes \(\Omega(n^2)\) edge weight
queries in the worst case, even when the weights form a metric taking
only the values \(1\) and \(2\).

\end{theorem}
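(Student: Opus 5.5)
We reduce from a query problem with a known \(\Omega(N)\) randomized lower bound: finding a single marked item among \(N=\Theta(n^2)\) unmarked ones, or equivalently computing the OR of \(N\) bits. With weights in \(\{1,2\}\), the triangle inequality holds automatically, since \(1+1\ge 2\). So every such weight function is a metric, and we are free to choose which pairs get weight \(2\). Fix a cardinality, say \(c=n/2\) with \(n\) even, plus some slack vertices if needed. The plan is to build a base instance whose optimum is \(1\) but only barely. Then a single pair flipped from \(1\) to \(2\) among \(\Theta(n^2)\) candidate positions raises the optimum to \(2\), while flipping any other candidate position changes nothing.

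\textbf{Base instance.} Take \(V=A\dot\cup B\) with \(|A|=|B|=n/2\). Put \(w(ab)=2\) for every \(a\in A\), \(b\in B\), and \(w=1\) inside \(A\) and inside \(B\). The threshold graph \(G_{>1}\) is then the complete bipartite graph \(K_{n/2,n/2}\). It is connected and bipartite, so its only proper 2-colorings are \(\{A,B\}\) and the swap, both of cardinality \(n/2\). Hence the optimum at \(c=n/2\) is \(1\). Now perturb by choosing a pair \(\{x,y\}\subseteq A\) and setting \(w(xy)=2\). Then \(G_{>1}\) contains the triangle \(x,y,b\) for any \(b\in B\), so it is not bipartite. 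No coloring achieves diameter \(1\), and the optimum becomes \(2\). The candidate positions are the \(\binom{n/2}{2}=\Theta(n^2)\) pairs inside \(A\), and every perturbed instance is still a \(\{1,2\}\)-metric. The same works using pairs inside \(B\), which we can ignore.

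\textbf{Query argument.} Given an input \(z\in\{0,1\}^N\) with \(N=\binom{n/2}{2}\) and at most one \(1\), define the weights so that the pair inside \(A\) indexed by \(i\) has weight \(1+z_i\), and all other weights are as in the base instance. Each edge-weight query is answered by at most one query to \(z\). The optimum equals \(1\) iff \(z=0\). A randomized algorithm that computes the exact optimum with probability \(2/3\) therefore decides whether \(z=0\) with probability \(2/3\) under the promise \(|z|\le 1\). This promise-OR, i.e.\ unstructured search, needs \(\Omega(N)\) queries. We prove this by Yao's principle: take the distribution that is \(0\) with probability \(1/2\) and a uniformly random single \(1\) otherwise. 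A deterministic algorithm making \(q\) queries sees only the base answers on the zero input, so it misses the \(1\) with probability at least \(1-q/N\) and errs with probability at least \(\tfrac12(1-q/N)\). Requiring error at most \(1/3\) forces \(q\ge N/3=\Omega(n^2)\). This holds for all even \(n\), hence for infinitely many \(n\).

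\textbf{2-MDCC and possible obstacles.} For 2-MDCC, the negation trick would leave the \(\{1,2\}\) range, so we instead map \(w\mapsto 3-w\). This keeps the values in \(\{1,2\}\), so the result is still a metric. Maximizing the minimum same-class weight under \(3-w\) is the same as minimizing the maximum under \(w\), and the optimum \(\mu\) becomes \(3-\mu\). So exact 2-MDCC solves the same promise problem. The main point to check is that the lower bound holds for a single prescribed cardinality. It does: the unperturbed optimum \(1\) is realized only at \(c=n/2\), and the perturbed optimum is \(2\) for every \(c\). A minor subtlety is that "worst case" must be read as the worst-case expected or fixed query count. Yao's principle covers both, via Markov's inequality if needed.
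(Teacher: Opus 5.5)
Your proof is correct and follows the paper's argument. It uses the same instance ($A\dot\cup B$ with cross weights $2$, weight $1$ inside $B$, and the $\binom{n/2}{2}$ pairs inside $A$ encoding the input bits) and the same reduction from OR, and your $w\mapsto 3-w$ map produces exactly the paper's mirrored 2-MDCC instance. The remaining differences are cosmetic: you certify optimum $2$ via a triangle in $G_{>1}$ rather than the paper's direct case split, and you prove the promise-OR bound yourself via Yao's principle instead of citing the known randomized query complexity of OR.
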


\begin{proof}

Take \(n\ge 4\) even and set \(c=n/2\). Partition \(V\) into known sets
\(A\) and \(B\) with \(|A|=|B|=n/2\). Set \(w(ab)=2\) for every
\(a\in A,\ b\in B\), and \(w(uv)=1\) for every \(\{u,v\}\subseteq B\).
For each pair \(\{u,v\}\subseteq A\), the adversary sets
\(w(uv)\in\{1,2\}\).

These instances are metric. Weights lie in \(\{1,2\}\), so
\(w(xz)\le 2\le w(xy)+w(yz)\) for any triple \(x,y,z\in V\).

By construction \(\mathrm{diam}(B)=1\). If every pair within \(A\) also
has weight \(1\), then \(\mathrm{diam}(A)=1\), so the partition
\((A,B)\) achieves \(\max\{\mathrm{diam}(A),\mathrm{diam}(B)\}=1\). The
optimum is \(1\). If some pair \(\{a,a'\}\subseteq A\) has weight \(2\),
the optimum is \(2\). Take any partition with \(|V_1|=n/2\). Since
\(|A|=|B|=n/2\), either \(V_1\in\{A,B\}\) or \(V_1\) contains both an
element of \(A\) and an element of \(B\). In the first case \(a,a'\)
both lie in class \(A\), so \(\mathrm{diam}(A)\ge w(aa')=2\). In the
second case \(V_1\) contains some \(a''\in A\) and \(b\in B\), so
\(\mathrm{diam}(V_1)\ge w(a''b)=2\). Either way the partition diameter
is at least \(2\). Weights are bounded by \(2\), so the optimum is
\(2\).

Hence distinguishing optimum \(1\) from optimum \(2\) amounts to
deciding whether some pair within \(A\) has weight \(2\). This is the OR
problem on the \(\binom{n/2}{2}=\Theta(n^2)\) bits indexed by
\(\binom{A}{2}\). Its bounded-error randomized query complexity is
\(\Theta(n^2)\) \cite{buhrman2002complexity}. Each edge weight query
reveals one bit, so the algorithm makes \(\Omega(n^2)\) edge weight
queries.

For 2-MDCC, mirror the construction: set \(w(ab)=1\) for
\(a\in A,\ b\in B\), \(w(uv)=2\) for \(\{u,v\}\subseteq B\), and
adversarial \(w(uv)\in\{1,2\}\) within \(A\). This is still a
\(\{1,2\}\)-metric. On the partition \((A,B)\), every pair within \(B\)
has weight \(2\). So the 2-MDCC objective equals the minimum pair weight
within \(A\). Any other partition with \(|V_1|=n/2\) places both an
element of \(A\) and an element of \(B\) in some class. That class
contains a pair of weight \(1\), so the objective is at most \(1\). Thus
the optimum is \(2\) iff every pair within \(A\) has weight \(2\),
otherwise \(1\). The same OR reduction gives \(\Omega(n^2)\) edge weight
queries.

\end{proof}

\bibliographystyle{plainurl}
\bibliography{references.bib}

\end{document}